\theoremstyle{plain}
\newtheorem{corollary}{Corollary}
\newtheorem{lemma}{Lemma}
\newtheorem{theorem}{Theorem}
\tikzstyle{none-small}=[fill=none, draw=none, shape=circle, tikzit category=misc, tikzit shape=circle, tikzit fill=none, font={\footnotesize}]
\tikzstyle{none-small-gray}=[fill=none, draw=none, shape=circle, text=gray, tikzit category=misc, tikzit shape=circle, tikzit fill=none, font={\footnotesize}]
\tikzstyle{gate}=[shape=rectangle, text height=1ex, text depth=0.25ex, yshift=0.5mm, fill=white, draw=black, minimum height=3mm, yshift=-0.5mm, minimum width=3mm, font={\footnotesize}, tikzit category=circuit]
\tikzstyle{meter}=[shape=rectangle, text height=1ex, text depth=0.25ex, yshift=0.5mm, fill=white, draw=black, minimum height=3mm, yshift=-0.5mm, minimum width=3mm, font={\footnotesize}, tikzit category=circuit, text width=4.5mm, label={{[shift={(0,-1.15)}]\metersymb}}]
\tikzstyle{big gate}=[shape=rectangle, text height=1.5ex, text depth=0.25ex, yshift=0.5mm, fill=white, draw=black, minimum height=10mm, yshift=-0.5mm, minimum width=5mm, font={\normalsize}, tikzit category=circuit]
\tikzstyle{long gate}=[shape=rectangle, text height=1ex, text depth=0.25ex, yshift=0.5mm, fill=white, draw=black, minimum height=3mm, yshift=-0.5mm, minimum width=5mm, font={\footnotesize}, tikzit category=circuit]
\tikzstyle{Z dot}=[inner sep=0mm, minimum size=2mm, shape=circle, draw=black, fill={rgb,255: red,221; green,255; blue,221}, tikzit category=zx]
\tikzstyle{Z phase dot}=[minimum size=5mm, font={\footnotesize\boldmath}, shape=rectangle, rounded corners=2mm, inner sep=1mm, outer sep=-2mm, scale=0.8, tikzit shape=circle, draw=black, fill={rgb,255: red,221; green,255; blue,221}, tikzit draw=blue, tikzit category=zx]
\tikzstyle{X dot}=[Z dot, shape=circle, draw=black, fill={rgb,255: red,255; green,136; blue,136}, tikzit category=zx]
\tikzstyle{X phase dot}=[Z phase dot, tikzit shape=circle, tikzit draw=blue, fill={rgb,255: red,255; green,136; blue,136}, font={\footnotesize\boldmath}, tikzit category=zx]
\tikzstyle{hadamard}=[fill=yellow, draw=black, shape=rectangle, inner sep=0.6mm, minimum height=1.5mm, minimum width=1.5mm, tikzit category=zx]
\tikzstyle{paulibox}=[fill={rgb,255: red,221; green,221; blue,255}, draw=black, shape=rectangle, inner sep=0.6mm, minimum height=5mm, minimum width=5mm, font={\footnotesize}, text height=1.5ex, text depth=0.25ex, tikzit category=zx]
\tikzstyle{vertex}=[inner sep=0mm, minimum size=1mm, shape=circle, draw=black, fill=black, tikzit category=misc]
\tikzstyle{vertex set}=[inner sep=0mm, minimum size=1mm, shape=circle, draw=black, fill=white, font={\footnotesize\boldmath}, tikzit category=misc]
\tikzstyle{small black dot}=[fill=black, draw=black, shape=circle, inner sep=0pt, minimum width=1.2mm, tikzit category=circuit]
\tikzstyle{cnot ctrl}=[fill=black, draw=black, shape=circle, inner sep=0pt, minimum width=1.2mm, tikzit category=circuit]
\tikzstyle{cnot targ}=[fill=white, draw=white, shape=circle, tikzit category=circuit, label={center:$\oplus$}, inner sep=0pt, minimum width=2.1mm, tikzit fill={rgb,255: red,102; green,204; blue,255}, tikzit draw=black]
\tikzstyle{ket}=[fill=white, draw=black, shape=regular polygon, regular polygon sides=3, regular polygon rotate=-30, scale=0.7, inner sep=1pt, tikzit category=circuit, tikzit shape=rectangle, tikzit fill=green]
\tikzstyle{bra}=[fill=white, draw=black, shape=regular polygon, regular polygon sides=3, regular polygon rotate=30, scale=0.7, inner sep=1pt, tikzit category=circuit, tikzit shape=rectangle, tikzit fill=red]
\tikzstyle{scalar}=[shape=rectangle, text height=1.5ex, text depth=0.25ex, yshift=0.5mm, fill=white, draw=black, minimum height=5mm, yshift=-0.5mm, minimum width=5mm, font={\normalsize}]
\tikzstyle{clabel}=[fill=white, draw=none, shape=rectangle, tikzit fill={rgb,255: red,56; green,255; blue,242}, font={\footnotesize}, inner sep=1pt, tikzit category=labels]
\tikzstyle{empty diagram}=[draw={gray!40!white}, dashed, shape=rectangle, minimum width=1cm, minimum height=1cm, tikzit category=misc]
\tikzstyle{cluster small}=[fill=none, thick, draw={rgb,255: red,0; green,128; blue,128}, shape=circle, tikzit category=misc, tikzit shape=circle, minimum size=1.5mm,  inner sep=0.3mm, tikzit fill=white, tikzit draw={rgb,255: red,0; green,128; blue,128}, font={\footnotesize}]
\tikzstyle{cluster}=[fill=none, thick, draw={rgb,255: red,0; green,128; blue,128}, shape=circle, tikzit category=misc, tikzit shape=circle, minimum size=3.5mm, inner sep=0pt, tikzit fill=white, tikzit draw={rgb,255: red,0; green,128; blue,128}, font={\footnotesize}]
\tikzstyle{cluster big}=[fill=none, thick, draw={rgb,255: red,0; green,128; blue,128}, shape=circle, tikzit category=misc, tikzit shape=circle, minimum size=4.5mm, text width=2mm, inner sep=0pt, tikzit fill=white, tikzit draw={rgb,255: red,0; green,128; blue,128}, font={\footnotesize}]
\tikzstyle{trapez}=[trapezium angle=-85, minimum width=15mm, trapezium stretches = true, fill={rgb,255: red,191; green,191; blue,191}, draw=black, shape=trapezium, tikzit category=zx]
\tikzstyle{hadamard edge}=[-, dashed, dash pattern=on 2pt off 0.5pt, thick, draw={rgb,255: red,68; green,136; blue,255}]
\tikzstyle{box edge}=[-, dashed, dash pattern=on 2pt off 0.5pt, thick, draw={rgb,255: red,203; green,192; blue,225}]
\tikzstyle{brace edge}=[-, tikzit draw=blue, decorate, decoration={brace,amplitude=1mm,raise=-1mm}]
\tikzstyle{diredge}=[->]
\tikzstyle{double edge}=[-, double, shorten <=-1mm, shorten >=-1mm, double distance=2pt]
\tikzstyle{gray edge}=[-, {gray!70!white}, thick]
\tikzstyle{pointer edge}=[->, very thick, gray]
\tikzstyle{boldedge}=[-, line width=1.2pt, shorten <=-0.17mm, shorten >=-0.17mm]
\tikzstyle{boldedge red}=[-, line width=1.4pt, shorten <=-0.17mm, shorten >=-0.17mm, draw=red, tikzit draw=red]
\newcommand{\lmuPh}{Fakultät für Physik,  Ludwig-Maximilians-Universität München,  80799 Munich, Germany}
\newcommand{\lmuCS}{MNM-Team,  Ludwig-Maximilians-Universität München, 80538 Munich, Germany}
\newcommand{\mpq}{Max-Planck-Institut für Quantenoptik,  85748 Garching, Germany}
\newcommand{\tum}{Chair for Design Automation,  Technical University of Munich, 80333 Munich, Germany}
\newcommand{\mcqst}{Munich Center for Quantum Science  and Technology (MCQST),  80799 Munich, Germany}
\newcommand{\lrz}{Leibniz Supercomputing Centre (LRZ), 85748 Garching, Germany}
\newcommand{\scch}{Software Competence Center Hagenberg GmbH (SCCH), 4232 Hagenberg im Mühlkreis, Austria}
\title{Multi-controlled Phase Gate Synthesis with ZX-calculus \\ applied to Neutral Atom Hardware \vspace{-10pt}}
\author{Korbinian Staudacher$^{\,1\,\ast}$ \quad Ludwig Schmid$^{\,2}$ \quad Johannes Zeiher$^{\,3,4,5}$ \\ Robert Wille$^{\,2,6}$ \quad Dieter Kranzlmüller$^{\,1,7}$
 \email{${}^\ast$Korbinian.Staudacher@nm.ifi.lmu.de} \institute{$^{1}$\lmuCS \\ $^{2}$\tum \\ $^{3}$\lmuPh \\ $^{4}$\mpq \\ $^{5}$\mcqst  \\ $^{6}$\scch \\ $^{7}$\lrz}
}
\begin{document}
\date{\today}
\maketitle

 \vspace{-10pt}
\begin{abstract}
  Quantum circuit synthesis describes the process of converting arbitrary unitary operations into a gate sequence of a fixed universal gate set, usually defined by the operations native to a given hardware platform.
  Most current synthesis algorithms are designed to synthesize towards a set of single-qubit rotations and an additional entangling two-qubit gate, such as CX, CZ, or the Mølmer–Sørensen gate. 
  However, with the emergence of neutral atom-based hardware and their native support for gates with more than two qubits, synthesis approaches tailored to these new gate sets become necessary.
  In this work, we present an approach to synthesize (multi-) controlled phase gates using ZX-calculus.
  By representing quantum circuits as graph-like ZX-diagrams, one can utilize the distinct graph structure of diagonal gates to identify multi-controlled phase gates inherently present in some quantum circuits even if none were explicitly defined in the original circuit.
  We evaluate the approach on a wide range of benchmark circuits and compare them to the standard Qiskit synthesis regarding its circuit execution time for neutral atom-based hardware with native support of multi-controlled gates.
  Our results show possible advantages for current state-of-the-art hardware and represent the first exact synthesis algorithm supporting arbitrary-sized multi-controlled phase gates.

\end{abstract}

\section{Introduction}
\label{sec:intro}
Compiling and optimizing quantum algorithms towards hardware-specific constraints is indispensable to efficiently use currently available noisy quantum hardware with limited gate fidelities and coherence times.
An important step of the compilation process is quantum circuit synthesis, converting arbitrary unitary operations to gate sequences natively supported by the hardware.
State-of-the-art synthesis algorithms, such as~\cite{shende2006synthesis,kliuchnikov2013fast}, are often focused on a superconducting hardware setting and synthesize towards singular two-qubit gates, e.g., CX, and single-qubit gates.
Such synthesis algorithms are less preferable for other hardware architectures, for instance, when gates acting on three or more qubits can be executed natively without decomposition, resulting in a reduced execution cost.

In this work, we propose an approach to synthesize quantum circuits towards single qubit gates and arbitrary-sized multi-controlled phase gates C${}_n$P$(\varphi)$.
To this end, we make use of the representation of a quantum circuit as a graph-like ZX-diagram where we can use powerful rewrite rules of the ZX-calculus to simplify diagrammatic structures~\cite{duncan-graph-theoretic-2020}.
This approach has shown to be a useful tool for tasks like hardware-agnostic circuit optimization or equivalence checking~\cite{kissingerReducingNumberNonClifford2020,staudacher2022reducing,peham2022equivalence}.
We show that C${}_n$P$(\varphi)$ have a distinct representation in graph-like ZX-diagrams as a combination of so-called phase gadgets, which occur naturally in the diagrams when using a simplification strategy proposed in~\cite{kissingerReducingNumberNonClifford2020}.
By modifying an existing extraction algorithm from~\cite{backens-there-2021} to translate graph-like diagrams back to quantum circuits, we can specifically optimize towards extracting phase gadget combinations corresponding to C${}_n$P$(\varphi)$ gates.

The benefit and potential of the resulting approach are shown by synthesizing gate functionality for the recently emerging neutral atoms platforms~\cite{saffmanQuantumInformationRydberg2010,saffmanQuantumComputingNeutral2019,henrietQuantumComputingNeutral2020,grahamMultiqubitEntanglementAlgorithms2022,morgado2021quantum,saffman2016quantum,schmidComputationalCapabilitiesCompiler2023}. Besides dynamic connectivity with atom rearrangements~\cite{bluvsteinQuantumProcessorBased2022,bluvsteinLogicalQuantumProcessor2023,shawMultiensembleMetrologyProgramming2024} and favorable properties regarding scalability and large-scale control~\cite{barredoAtombyatomAssemblerDefectfree2016,pauseSuperchargedTwodimensionalTweezer2023,bluvsteinQuantumProcessorBased2022,norciaIterativeAssembly1712024,gygerContinuousOperationLargescale2024}, this technology offers native support for multi-controlled gates such as $\mathrm{C}_{n}\mathrm{P}$, and $\mathrm{CZ}_{n}$~\cite{mullerMesoscopicRydbergGate2009,isenhowerMultibitCkNOTQuantum2011,dlaskaQuantumOptimizationFourBody2022,everedHighfidelityParallelEntangling2023,Jandura2022timeoptimaltwothree,cao2024multiqubit}.
We integrate our extraction scheme into a full gate synthesis and optimization process and compare total execution times on  hardware against Qiskit synthesis routines, considering current state-of-the-art parameters.
Our results show promising advantages in the form of reduced execution times on different benchmark circuits.

The paper is structured as follows: In the first part, we give a basic introduction to ZX-calculus, including graph-like ZX-diagram simplification, and show how multi-controlled phase gates can be identified and extracted from the diagrams.
In the second part, we focus on the application of the proposed approach to neutral-atom-specific gate synthesis and discuss its effect on the execution time. 

\section{Related work}
So far, algorithms supporting the synthesis of gates acting on more than two qubits are mostly centered around the generation of Toffoli gates. Ref.~\cite{grosse2009exact} introduces a synthesis algorithm for classical logic reversible functions using multi-control Toffoli gates and there exist algorithms for synthesizing towards universal Toffoli gate sets~\cite{amyImprovedSynthesisToffoliHadamard2023}, even with optimal numbers of Toffoli gates~\cite{mukhopadhyay2024synthesizing}. 
The synthesis of multi-controlled phase gates is less studied. Ref.~\cite{zhang2023characterization} proposes an optimal synthesis algorithm, but restricted to diagonal unitaries as an input. For universal circuits, a recent framework for neutral atom systems~\cite{patelGeyserCompilationFramework2022} is able to synthesize circuits with $CCZ$ gates. However, the synthesis process is based on non-exact numerical optimization procedures and does not consider more than three-qubit gates or arbitrary rotations.

\section{Preliminaries}
\label{sec:background}
In this section we introduce the ZX-calculus fundamentals and describe how graph-like diagrams can be simplified and extracted to quantum circuits, which represents the basis of our synthesis approach.
We only give a brief overview of ZX-calculus, for a more detailed introduction we refer to~\cite{van2020zx,coecke2017picturing,duncan-graph-theoretic-2020}.

\subsection{ZX-calculus}
ZX-calculus is a diagrammatic language for reasoning about linear maps in quantum computing where nodes (spiders) and edges (wires) form an undirected graph called ZX-diagram. There are two types of spiders: The green Z-spiders and the red X-spiders. Spiders can be parametrized with an angle $\alpha \in [0, 2\pi)$ and correspond to two-dimensional matrices in Hilbert space: 
\vspace{-3em}
\[
\tikzfig{zx-spiders}
\]
\vspace{-3em}\\
Spiders can have any number of ingoing and outgoing wires and we can compose two diagrams either horizontally by joining the outputs of one diagram with the inputs of the other (denoted by $\circ$), or vertically by placing them side by side (denoted by $\otimes$). This corresponds to the known dot and tensor product in Hilbert space.
For convenience, we distinguish between two types of wires: \textit{Normal wires}, representing the identity, and \textit{Hadamard wires}, representing the Hadamard matrix.
Wires entering the diagram from the left are called input wires, with the adjacent spiders defined as inputs $I$, and wires exiting to the right are called output wires, with adjacent spiders defined as outputs $O$.
We refer to the set of spiders $v \in I \cup O$ as \textit{boundary spiders} and the complementary set of spiders $v \in V \setminus (I \cup O)$ as the \textit{interior spiders}.
The complements of the inputs and the outputs are defined as $\overline{I} = V \setminus I$ and $\overline{O} = V \setminus O$ respectively.
We can write any quantum circuit as ZX-diagram by replacing gates with equivalent diagrams and use rules from ZX-calculus to modify them without changing the linear map. For instance, the following rules hold: 
\vspace{-.5em}
\[
\tikzfig{most-important-rules}
\]
\vspace{-.75em}\\
The fusion rule ($f$) allows to merge spiders of the same color together if they are connected by at least one normal wire and ($h$) allows to change the colors of spiders by flipping normal and Hadamard wires. All rules hold in both directions and are also valid with interchanged colors, so we can also split up spiders with ($f$).
There exists a complete graphical rule set for transforming ZX-diagrams~\cite{vilmart-near-optimal-2018}.

\subsubsection{Graph-like diagrams}
In this work, we consider the class of \textit{graph-like} ZX-diagrams as introduced in~\cite{duncan-graph-theoretic-2020}, which allow us to represent any quantum computation as a graph of parametrized green Z-spiders and Hadamard wires. In those diagrams we represent Hadamard wires between spiders as dashed blue line instead of the yellow box for easier visualization.
One can transform any ZX-diagram into an equivalent graph-like ZX-diagram by repeatedly applying standard ZX-rules~\cite{duncan-graph-theoretic-2020} (c.f. \Cref{fig:grover}).
\begin{figure}
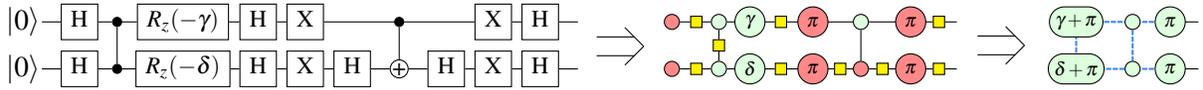

	\[
	\tikzfig{tikz/grover-example}
	\]
	\caption{Translation of a two-qubit Grover search into a graph like ZX-diagram. Gates are replaced by their ZX-calculus counterpart and the diagram is made graph-like by repeated application of ($f$) and ($h$).}
	\label{fig:grover}
\end{figure}
This formalism provides a link between quantum computing and graph theory since the entire computation is captured by the \textit{underlying graph} spanned by Hadamard wires, combined with phases of Z-spiders. Moreover, graph-like diagrams can be directly interpreted as measurement patterns in the model of measurement-based quantum computing (MBQC)~\cite{briegel2009measurement,backens-there-2021}.

\subsubsection{Diagram simplification}\label{sec:diag-simp}
We can rewrite graph-like diagrams into equivalent simplified versions (i.e., decreasing the number of spiders or wires), by using graph-theoretic rewrite rules as shown in~\cite{duncan-graph-theoretic-2020}: \paragraph{Local complementation} Given an undirected graph $G$, local complementation on a vertex $v$ (written $G\star v$) consists of flipping the edges between the neighbors of $v$. That is, after local complementation, every pair of neighbors of $v$ is connected iff it was not connected before. In graph-like ZX-diagrams, we can use a rewrite rule based on local complementation ($lc$) to eliminate spiders with a phase of $\pm\frac{\pi}{2}$:

\[    
    \tikzfig{lcomp-rule}
\]
\vspace{-2em}
\paragraph{Pivoting} A Pivot $G\land uv$ consists of three local complementations ($G\star u\star v\star u$) applied on a pair of neighboring vertices $u,v$. We can use a similar rewrite rule ($p$) in graph-like ZX-diagrams to eliminate pairs of spiders with phase $0$ or $\pi$:
\vspace{-.5em}
\[
\tikzfig{pivot-rule}
\]
\vspace{-1em}\\
By repeatedly applying those rules one can eliminate all interior spiders with phase $\pm\frac{\pi}{2}$ and every pair of interior spiders with phase $0$ or $\pi$~\cite{duncan-graph-theoretic-2020}.
\paragraph{Phase gadgets}
One can further simplify ZX-diagrams with a slightly modified version of the pivot rule if we allow one spider of a pair to have a non-Clifford phase $\sigma$~\cite{kissingerReducingNumberNonClifford2020}. The non-Clifford spider does not get removed but is transformed into a so called \textit{phase gadget}: 
\vspace{-.5em}
\[
\tikzfig{pivot-gadget-rule}
\]
\vspace{-1em}\\
In graph-like ZX-diagrams a phase gadget consists of a ``top'' spider exclusively connected to a phaseless ``root'' spider connected to other spiders. Simplifying graph-like diagrams with all three rules, we obtain a diagram where spiders either have a non-Clifford phase, are part of a phase gadget or a boundary.

\subsubsection{Gflow in graph-like diagrams}
Gflow is a graph-theoretic property for measurement patterns defined on labeled open graphs $(G,I,O,\lambda)$, where $G=(V,E)$ is an undirected graph with vertices $V$ and edges $E$, $I\subseteq V$, $O\subseteq V$ are the set of inputs resp. outputs, and $\lambda$ is a labeling function assigning each vertex a measurement plane of the Bloch sphere in $\{XY,XZ,YZ\}$~\cite{browne2007generalized}. A labeled open graph has gflow if there exists a map $g: \overline{O} \rightarrow \mathcal{P}(\overline{I})$ and a partial order $\prec$ over $V$, s.t. for all $v\in \overline{O}$:
\begin{itemize}
	\item If $w\in g(v)$ and $v\neq w$, then $v\prec w$.
	\item If $w\in Odd(g(v))$ and $v\neq w$, then $v\prec w$.
	\item If $\lambda(v) = XY$, then $v\notin g(v)$ and $v\in Odd(g(v))$.
	\item If $\lambda(v) = XZ$, then $v\in g(v)$ and $v\in Odd(g(v))$.
	\item If $\lambda(v) = YZ$, then $v\in g(v)$ and $v\notin Odd(g(v))$.
\end{itemize}
In graph-like diagrams, we interpret the underlying graph as a labeled open graph with phase gadgets corresponding to $YZ$ measurements\footnote{The root spider is labeled as $YZ$ measurement, while the top spider corresponds to a measurement effect which is omitted from the underlying graph.} and other spiders corresponding to $XY$ measurements~\cite{backens-there-2021}. Since the initial graph-like diagrams obtained from quantum circuits (as in \Cref{fig:grover}) have gflow~\cite{duncan-graph-theoretic-2020} and all above rules preserve gflow~\cite{backens-there-2021}, the simplified diagrams have gflow as well.

\subsubsection{Circuit extraction}\label{sec:basic-extraction}
Extracting quantum circuits back from graph-like ZX-diagrams where the circuit has only as many qubits as there are outputs/inputs, is so far only possible in polynomial time if the underlying graph has some kind of flow~\cite{backens-there-2021,simmons2021}. Here, we give a brief overview of the extraction algorithm for graph-like ZX-diagrams with gflow as described in~\cite{backens-there-2021}.
The algorithm extracts a quantum circuit from a ZX-diagram by taking suitable parts of the diagram and creating their equivalent representation as a quantum gate within the circuit at the corresponding position. These parts are then removed from the diagram, extracting one gate at a time, until only the inputs and outputs of the diagram remain. During the process, a set of green Z-spiders called the \textit{frontier} separates the extracted part of the diagram from the unextracted part.
Phases of frontier spiders can be directly extracted as $R_z$ gates, and Hadamard wires between frontier spiders as CZ gates. Furthermore, Hadamard wires where a frontier spider $w$ is exclusively connected to a non-frontier spider $v$ can be extracted as Hadamard gates with $v$ replacing $w$ in the frontier:
\vspace{-.5em}
\[
\tikzfig{basic-extraction}
\]
\vspace{-.5em}\\
If every spider in the frontier has at least two non-frontier neighbors we can add wires of a frontier spider to the wires of another one by placing a CX gate on the extracted circuit. If all neighbors are measured in the XY plane, gflow ensures that there exists a combination of additions so that there remains a frontier spider with only a single neighbor. We can obtain such a combination by applying Gaussian elimination on the biadjacency matrix between the frontier vertices and their neighbors.
Otherwise, if there are YZ-measured neighbors, we can transform them into XY measurements by applying a pivot on the neighbor and a connected frontier spider:
\vspace{-.5em}
\[
\tikzfig{gauss-example}
\]
\vspace{-.5em}\\
By repeating these procedures, we can transform the entire diagram into a quantum circuit.

\section{Extracting controlled phase gates from graph-like ZX-diagrams}
The process of transforming quantum circuits to graph-like ZX-diagrams, simplifying them, and re-extracting circuits can already be seen as an implicit synthesis algorithm to the gate set $\{R_z,H,CZ,CX\}$.
Given the requirements of neutral atom platforms, it may be desirable to incorporate two- and multi-controlled phase gates of arbitrary rotations into this gate set.
We first show how such gates are represented in graph-like ZX-diagrams, then how we incorporate this finding into the extraction algorithm.
\subsection{Graph-like representation of controlled phase gates}
The (multi-) controlled phase gate is a diagonal gate, meaning all non-zero entries of its corresponding matrix in the Z-basis are on its diagonal. Such gates can be represented as a semi-Boolean function $f : \{0, 1\}^n \rightarrow \mathbb{C}$ which assigns a complex number to each basis state. Ref.~\cite{kuijpers2019graphical} shows, that any semi-Boolean function $f(b) = a_b$ with $a_b\in\mathbb{C}$ and $b\in\mathbb{B}^n$ can be expressed in ZX-calculus as follows:
\vspace{-.5em}
\begin{equation}\label{eq:semibool}
	\tikzfig{zx-semiboolean}
\end{equation}
\vspace{-1em}\\
The part in the dashed box is repeated for every Boolean vector $c$ in $\mathbb{B}^n$ and the grey box decomposes into $n$ subdiagrams either connecting the corresponding lower and upper wire with a normal wire if the $i$-th element of the Boolean vector is $1$, or disconnecting them if it is $0$.
Further, the phase $\alpha_c$ can be obtained by the formula on the right, where $\chi(b,c)=(-1)^{b\cdot c}$ corresponds to a parity function with $\cdot$ being the inner product: If $b$ and $c$ overlap in an odd number of elements it returns -1, else 1.
This rule yields a combination of phase gadgets, and when applying the color change rule on the middle red spider, we obtain the same graph structures as introduced in the previous section.
To model an $n$-controlled phase gate C${}_n$P$(\varphi)$ as a semi-Boolean function, we take $\alpha$ as an all-zero vector of length $2^n$ except for its last entry being $\varphi$.
Following \Cref{eq:semibool}, one can transform the function to a ZX-diagram which has $2^{n}-1$ phase gadgets split up into $\binom{k}{n}$ phase gadgets for $k\in\{1,\ldots n\}$.
For instance, the two and three-qubit-controlled phase gates have the following representation:\\
\noindent\begin{minipage}{.4\linewidth}\vspace{-1em}
	\begin{equation}\label{eq:extract-phases}
		\tikzfig{extract-cphase}
	\end{equation}
\end{minipage}%
\begin{minipage}{.6\linewidth}
	\begin{equation}
		\tikzfig{extract-mcphase}
	\end{equation}
\end{minipage}\\

For arbitrary-sized multi-controlled phase gates, this generalizes to the following theorem:

\begin{theorem}[Multi-controlled phase gates]
  \label{theo:mcp-gates}
  	Let $\binom{S}{k}$ denote the set of all k-combinations of a set $S$ and PG$(\alpha,N)$ denote a phase gadget with phase $\alpha$ connected to neighbors $N$ which are empty Z-spiders. A n-qubit controlled phase gate C${}_n$P$(\varphi)$ is equivalent to a graph-like ZX-diagram with outputs $O,|O|=n$ having a phase of $\alpha$ and phase gadgets\footnote{Note, that the product notation here corresponds to the composition $\circ$ of ZX-diagrams.} \[
  \prod_{k=2}^{n}\prod_{s\in\binom{O}{k}} PG((-1)^{k+1}\alpha,s),\qquad \alpha = \frac{\varphi}{2^{n-1}}
  \]
  \begin{proof}
  	A graphical proof is given by the ZX representation of the diagonal gate and the corresponding proof in~\cite{kuijpers2019graphical}, a combinatorial proof can be found in~\cite{amy2019}. We give an alternative combinatorial proof in~\Cref{sec:proof-theorem-x}.
  \end{proof}
\end{theorem}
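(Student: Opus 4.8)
The plan is to reduce the claim to a single scalar identity over $\mathbb{R}$ and then prove that identity by a short Fourier/M\"obius expansion. First I would observe that each ingredient of the proposed diagram --- the $n$ output spiders carrying phase $\alpha$ and every phase gadget $PG((-1)^{k+1}\alpha,s)$ --- is diagonal in the computational basis, and that diagonal maps commute, so the order of composition is irrelevant and the diagram implements the diagonal operator $D$ with $D\,|b\rangle=e^{i\Phi(b)}|b\rangle$ for $b\in\{0,1\}^{n}$, where $\Phi(b)$ is the sum of the phases of the ``active'' ingredients. By the standard phase-gadget semantics, $PG(\beta,s)$ contributes $\beta$ to $\Phi(b)$ exactly when $\bigoplus_{i\in s}b_{i}=1$, and an output spider of phase $\alpha$ on wire $i$ contributes $\alpha$ exactly when $b_{i}=1$ --- this is the $k=1$ case. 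With $\bigoplus$ denoting XOR valued in $\{0,1\}$ and $\alpha=\varphi/2^{n-1}$, this gives
\[
\Phi(b)=\frac{\varphi}{2^{n-1}}\sum_{\emptyset\neq s\subseteq\{1,\dots,n\}}(-1)^{|s|+1}\Bigl(\,\bigoplus_{i\in s}b_{i}\Bigr).
\]
Since $\mathrm{C}_{n}\mathrm{P}(\varphi)$ is diagonal with entry $e^{i\varphi}$ at $b=1\cdots1$ and $1$ elsewhere, it suffices to show $\Phi(b)=\varphi\prod_{i=1}^{n}b_{i}$ for all $b$.

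To prove the identity I would expand $\prod_{i}b_{i}$ over the parity characters $\chi_{s}(b)=\prod_{i\in s}(-1)^{b_{i}}$: from $b_{i}=\tfrac12\bigl(1-(-1)^{b_{i}}\bigr)$ one obtains $\prod_{i=1}^{n}b_{i}=2^{-n}\prod_{i=1}^{n}\bigl(1-(-1)^{b_{i}}\bigr)=2^{-n}\sum_{s\subseteq\{1,\dots,n\}}(-1)^{|s|}\chi_{s}(b)$. Substituting $\chi_{s}(b)=1-2\bigl(\bigoplus_{i\in s}b_{i}\bigr)$, using $\sum_{s}(-1)^{|s|}=0$ for $n\geq1$, and noting that $s=\emptyset$ contributes nothing to the remaining sum, this collapses to $\prod_{i=1}^{n}b_{i}=2^{1-n}\sum_{\emptyset\neq s}(-1)^{|s|+1}\bigl(\bigoplus_{i\in s}b_{i}\bigr)=\Phi(b)/\varphi$, and multiplying by $\varphi$ finishes the argument. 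Note that this holds exactly as an identity of real numbers, so no reduction modulo $2\pi$ is needed. Alternatively, one can avoid characters entirely: fixing the Hamming weight $m=|\{i:b_{i}=1\}|$ and classifying each $k$-subset $s$ by $j=|s\cap\operatorname{supp}(b)|$ gives $\sum_{|s|=k}\bigl(\bigoplus_{i\in s}b_{i}\bigr)=\sum_{j\ \mathrm{odd}}\binom{m}{j}\binom{n-m}{k-j}$; summing against $(-1)^{k+1}$, reindexing by $\ell=k-j$, and using that $\sum_{\ell=0}^{n-m}(-1)^{\ell}\binom{n-m}{\ell}$ is $0$ for $m<n$ and $1$ for $m=n$ shows the total is $2^{n-1}$ when $m=n$ and $0$ otherwise, i.e.\ exactly $2^{n-1}\prod_i b_i$.

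Finally, the splitting $\sum_{\emptyset\neq s}=\sum_{|s|=1}+\sum_{k=2}^{n}\sum_{s\in\binom{O}{k}}$ matches the theorem term by term: the singletons are precisely the output phases $\alpha$, and the subsets of size $k\geq2$ are the gadgets $PG((-1)^{k+1}\alpha,s)$ because $(-1)^{|s|+1}=(-1)^{k+1}$; that the resulting object is a legitimate graph-like diagram is immediate, since each gadget is a phaseless root joined by Hadamard edges to a subset of the phase-$\alpha$ output spiders. I expect the only point requiring care to be fixing the correct phase-gadget convention --- that a gadget of phase $\beta$ acts as $e^{i\beta}$ on the odd-parity branch with no residual scalar, and that gadgets sharing leg wires simply multiply as diagonal operators --- after which the remaining combinatorics is just the binomial theorem.
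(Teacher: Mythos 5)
Your proposal is correct, and while the reduction to a scalar identity mirrors the paper's appendix, your proof of that identity takes a genuinely different and shorter route. Both you and the paper first observe that all ingredients are diagonal, that a gadget $PG(\beta,s)$ contributes $\beta$ precisely on the odd-parity branch $\bigoplus_{i\in s}b_i=1$, and that the single-qubit output phases are the $k=1$ instances; the task then reduces to showing
\[
\sum_{\emptyset\neq s\subseteq\{1,\dots,n\}}(-1)^{|s|+1}\Bigl(\bigoplus_{i\in s}b_i\Bigr)=2^{n-1}\prod_{i=1}^{n}b_i .
\]
The paper proves this (as its Lemma on multi-controlled phase gates) by classifying subsets according to $j=|s\cap\operatorname{supp}(b)|$ and then running a double induction on $n$ and $l=|\operatorname{supp}(b)|$, invoking the Vandermonde identity and several cancellations. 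Your main argument instead expands $\prod_i b_i=2^{-n}\prod_i\bigl(1-(-1)^{b_i}\bigr)$ over parity characters and substitutes $\chi_s(b)=1-2\bigl(\bigoplus_{i\in s}b_i\bigr)$, which yields the identity in two lines with no induction; your secondary argument is exactly the paper's combinatorial setup, but you resolve it by factorizing the double sum into $\bigl(\sum_{j\ \mathrm{odd}}\binom{m}{j}\bigr)\bigl(\sum_{\ell}(-1)^{\ell}\binom{n-m}{\ell}\bigr)=2^{m-1}\cdot[m=n]$ rather than inducting. Both of your routes are sound and strictly simpler than the paper's; what the paper's version buys is an explicit, term-by-term bookkeeping (illustrated in its Figure 3) that makes the contribution of each gadget size visible, whereas your character expansion buys brevity and makes clear that the statement is just the Fourier/M\"obius expansion of the AND function. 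Your closing caveat about the phase-gadget scalar convention is the right one to flag, and it is handled the same way (implicitly, up to global scalar) in the paper.
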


\subsection{Adaption of the extraction algorithm}
\label{sec:algorithm}
We adapt the algorithm described in~\ref{sec:basic-extraction} by including an additional C${}_n$P gate extraction step between CZ and R${}_z$ extraction.
For that, we carry out a pattern match on the phase gadgets which are exclusively connected to the outputs, i.e., the frontier. 
If we find a graph structure as described in~\Cref{theo:mcp-gates} for $n$ frontier spiders, we take the phase of the gadget which is connected to all $n$ spiders as the desired phase $\alpha$ if $n$ is odd, or $-\alpha$ if $n$ is even, and adjust the phases of all other gadgets using the following two rewrites:

\noindent\begin{minipage}{.5\linewidth}
	\begin{equation}\label{eq:output-unfusion}
		\tikzfig{output-unfusion}
	\end{equation}
\end{minipage}%
\begin{minipage}{.5\linewidth}
	\begin{equation}\label{eq:gadget-unfusion}
		\tikzfig{gadget-unfusion}
	\end{equation}
\end{minipage}\\

With~\Cref{eq:output-unfusion}, we extract the unwanted part of an output phase as R${}_z$ gate, and with~\Cref{eq:gadget-unfusion}, we split up phase gadgets into a part with the desired phase and another gadget so that the sum of the phases yields the original one. Both rewrites are sound in ZX-calculus: The first corresponds to an application of the fusion rule as mentioned in~\Cref{sec:background} and the second is a reversed version of the gadget fusion rule as shown in~\cite[Section D]{kissingerReducingNumberNonClifford2020}. With adjusted phases, we extract the entire graph structure by removing it from the diagram and placing a C${}_n$P$(\varphi)$ gate with $\varphi=\alpha\cdot 2^{n-1}$ on the circuit. \\
We can extend this procedure by also allowing the extraction of graph structures where some phase gadgets required for a C${}_n$P extraction are missing in the diagram. Consider the following example, where we are initially missing two 2-ary phase gadgets with $-\frac{\pi}{4}$ to complete a C${}_2$P$(\pi)$ structure:
\begin{equation}\label{eq:complete-mcp-structure}
	\tikzfig{mcphase-completion-example}
\end{equation}
By adding pairs of phase gadgets with opposite phases corresponding to the identity, we complete the required graph structure to extract the gate. Some inserted gadgets then remain in the diagram and are extracted later. 
If we always take the gadget with the most neighbors, complete the diagram to match a C${}_n$P gate, and extract it, every phase gadget will get extracted as part of a C${}_n$P gate at some point and we can entirely omit $YZ$ spider eliminations via pivoting. 

\subsubsection{Preservation of gflow}
For a complete translation of graph-like diagrams into quantum circuits, it is essential that all operations preserve gflow on the diagram. The rewrites of Equations 3-7 essentially reduce to deletions and insertions of phase gadgets, i.e., $YZ$ measurements, connected to only outputs. Since the original extraction algorithm preserves gflow and it has been shown in ~\cite[Lemma 3.4.]{backens-there-2021} that the deletion of arbitrary $YZ$ measurements preserves gflow, the same remains to be shown for the insertion case: 
\begin{lemma}[Insertion of $YZ$ measurements on outputs]
	Let $(g,\prec)$ be a gflow for $(G,I,O,\lambda)$ and let $W\subseteq O$. Then $(G',I,O,\lambda')$, where $G' = (V',E')$ with $V'=V\cup \{x\}$, $\lambda'(x) = YZ$ and $E'=E\cup \{(x,w) | w\in W\}$ has a gflow.
	\begin{proof}
		We provide the detailed proof in \Cref{sec:gadget-insertion}.
	\end{proof}
\end{lemma}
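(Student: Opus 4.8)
The plan is to exhibit an explicit gflow $(g',\prec')$ for $(G',I,O,\lambda')$ obtained by extending the given $(g,\prec)$, and then to check the five defining conditions. On the old vertices nothing changes: put $\lambda'(v)=\lambda(v)$ and $g'(v)=g(v)$ for every $v\in\overline{O}$. For the new vertex, which is interior ($x\notin O$) and not an input ($x\notin I$), the minimal admissible choice is $g'(x):=\{x\}$. Then $Odd_{G'}(g'(x))=Odd_{G'}(\{x\})=N_{G'}(x)=W$, and since $W\subseteq O$ while $x\notin O$ we get $x\notin Odd_{G'}(g'(x))$; together with $x\in g'(x)$ this already gives the $YZ$-condition at $x$.

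Before dealing with the order, I would record that inserting the edges $\{(x,w):w\in W\}$ barely perturbs the odd-neighbourhood sets that appear in the definition. For every $v\in\overline{O}$ we have $g'(v)=g(v)\subseteq V$, so $x\notin g(v)$; hence $|N_{G'}(u)\cap g(v)|=|N_G(u)\cap g(v)|$ for every old vertex $u\in V$, because the only neighbourhoods that changed — that of $x$ and those of the $w\in W$ — differ from the old ones only by the vertex $x$, which lies outside $g(v)$. Therefore $Odd_{G'}(g(v))\cap V=Odd_G(g(v))$, and the single new element that can enter $Odd_{G'}(g(v))$ is $x$ itself, which happens exactly when $|W\cap g(v)|$ is odd. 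Write $A:=\{v\in\overline{O}:|W\cap g(v)|\text{ odd}\}$ for that set (note $x\notin A$ since $|W\cap\{x\}|=0$).

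For the order, I would first observe that we may assume without loss of generality that $\prec$ is the partial order generated by its forced relations $v\prec z$ with $z\in g(v)\cup Odd(g(v))$ and $v\neq z$, since discarding any other relations leaves a gflow; in particular every output is then $\prec$-maximal, as every generating pair starts in $\overline{O}$. Now take $\prec'$ to be the transitive closure of $\prec$ together with $v\prec' x$ for all $v\in A$ and $x\prec' w$ for all $w\in W$. This is again a strict partial order: any cycle would have to run through $x$, but every relation into $x$ starts in $\overline{O}$ while every relation out of $x$ ends at a $\prec$-maximal output $w\in W$, so no closed walk through $x$ can exist. It then remains to verify the five conditions for $(g',\prec')$: for $x$ they hold by the construction of $g'(x)$ and the relations $x\prec' w$; for an old vertex $v$ the membership conditions ($v\in g'(v)$ or $v\notin g'(v)$) are inherited verbatim, the $Odd$-conditions coincide with the old ones on $V$ by the previous paragraph, and the only possible new obligation — namely $v\prec' x$ whenever $x\in Odd_{G'}(g'(v))$, i.e.\ whenever $v\in A$ — holds by the definition of $\prec'$.

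The only step with genuine content is the acyclicity of $\prec'$; the rest is bookkeeping that follows from the fact that $x$ is attached only to outputs. I expect this acyclicity to be the main obstacle, and it is resolved precisely by discarding the superfluous order relations so that outputs become $\prec$-maximal, which prevents the freshly imposed chains $v\prec' x\prec' w$ from ever closing up.
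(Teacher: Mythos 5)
Your proposal is correct and follows essentially the same route as the paper's Appendix~A: both set $g'(x)=\{x\}$, keep $g'(v)=g(v)$ on the old vertices, and extend the order by inserting $x$ after non-outputs and before outputs. The one place where you go beyond the paper is the explicit acyclicity check for $\prec'$ (via the observation that one may take $\prec$ to be generated by the forced relations, so that outputs are $\prec$-maximal); the paper simply asserts that the transitive closure is a partial order, so your extra care there is a genuine, if minor, improvement.
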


\subsubsection{Time complexity}
\label{sec:time-complexity}
The time complexity of the proposed approach in terms of elementary graph operations depends on whether we allow additional insertions of phase gadgets or not. Let $k$ denote the number of spiders in a diagram and $n$ the number of outputs: 
\begin{itemize}
	\item If we do not allow additional insertions of phase gadgets, we have approximately the same runtime as the original algorithm, namely $O(n^2k^2+k^3)$ which is summed from the runtime for Gaussian elimination $O(n^2k)$, pivoting $YZ$ measurements $O(k^2)$ and $k$ steps in total~\cite{backens-there-2021}. Additionally, for our approach, we have to split at most $k$ phase gadgets at each step, which adds another $O(nk)$ term to the elementary graph operations. Yet, this term gets absorbed by the complexity of the Gaussian elimination.
	\item If we allow phase gadget insertions to complete the graph structures corresponding to a C${}_n$P gate, the complexity essentially becomes $O(2^{n+1}k)$. This is because, in the worst case, we would complete structures where there is only a single phase gadget connected to all $n$ outputs, and we need to add $2\cdot 2^n -n-2$  additional gadgets. We want to emphasize that this worst-case complexity is unlikely to occur in practice. 
 Yet, for larger circuits, it may be useful to limit the size of extractable C${}_n$P gates to a constant.
\end{itemize}

\section{Neutral Atom Circuit Synthesis}
\label{sec:evaluation}
In the following, we want to apply the proposed extraction scheme to circuit synthesis for neutral atom (NA)-based hardware due to their native support of $\mathrm{C}_n\mathrm{P}$ gates.
Therefore, we briefly introduce the hardware capabilities~\cite{schmidComputationalCapabilitiesCompiler2023}, embed our proposed scheme into a complete synthesis procedure, and evaluate the effect of the $\mathrm{C}_n\mathrm{P}$ extraction regarding the circuit execution time in comparison to Qiskits internal synthesis algorithm.

\subsection{Neutral Atom Background}
\label{sec:neutr-atom-backgr}

For NA-based quantum computers, qubit registers are realized by placing single atoms in optical dipole traps created by laser beams, referred to as optical lattices or optical tweezers.
While arbitrary atom arrangements are possible, we assume a rectangular grid as illustrated in \Cref{fig:neutral-atoms}.
The qubit states can be encoded in long-lived internal atomic states such as hyperfine or nuclear spin states.
Commonly employed atomic species include alkali or alkaline-earth(-like) atoms such as Rb and Sr, which provide suitable internal states with long coherence times.
Gates are realized with specific laser pulses on the atoms using uniform global beams, with the possibility of addressing a whole register or individual qubits~\cite{grahamMultiqubitEntanglementAlgorithms2022,levineDispersiveOpticalSystems2022}.
Multi-qubit gates are based on the long-range interaction between close-by atoms excited to high-lying Rydberg states~\cite{mullerMesoscopicRydbergGate2009,isenhowerMultibitCkNOTQuantum2011,everedHighfidelityParallelEntangling2023,levineParallelImplementationHighFidelity2019,saffmanQuantumInformationRydberg2010,saffman2016quantum,morgado2021quantum}.
There exist different protocols to realize both single- and multi-qubit gates, and the preferred implementation depends on many parameters such as the chosen atom species, the respective qubit encoding, and the experimental setup.
In this work, we focus on individually addressable $\mathrm{C}_{n} \mathrm{P}(\varphi)$ gates between neighboring atoms as a generalization of the often implemented CZ gate, which can be realized by tuning the accumulated phase during the Rydberg interaction to an arbitrary angle $\varphi$ instead of $\pi$~\cite{levineParallelImplementationHighFidelity2019}.
This might even result in improved gate times and fidelity due to the shorter time of the atom spent in the Rydberg state~\cite{everedHighfidelityParallelEntangling2023}.
Regarding single-qubit gates, we assume a scheme between fast, individually addressable AC Stark shift beams, realizing local $R_{z}$ rotations and slow, globally addressing microwave pulses performing a rotation about an axis in the XY-plane for all qubits simultaneously.
In particular, we assume the following gate definitions, where single-qubit gates are equivalent to the ones from \cite{nottinghamDecomposingRoutingQuantum2023} with GR as global XY-rotations applied to all $n$~qubits, and $\hat{Y}$ being the Pauli-Y matrix:
\begin{align}
  \label{eq:gate-definitions}
  \begin{split}
  \mathrm{C}_{n} \mathrm{P}(\varphi) \equiv \mathrm{diag}(1,\, \dots,1,\, e^{i \varphi}), \quad \mathrm{R}_{z}(\gamma^{\pm}) \equiv \mathrm{diag}(e^{-i\gamma^{\pm} /2} ,e^{-i\gamma^{\pm} /2} ), \quad
    \mathrm{GR}(\theta_{\mathrm{max}})\equiv  \exp \left( - i \frac{\theta}{2}\, \sum_{i=1}^{n} \hat{Y}_{i}  \right)
     \end{split}
\end{align}
In this setting, sets of arbitrary but simultaneous single-qubit gates on different qubits can be converted into two global illuminations interleaved with single-qubit Z-rotations.
On unused qubits, the two complementary global rotations cancel out, effectively applying an identity operation.
To convert single-qubit gates to this setting, we consider the transversal decomposition scheme introduced in~\cite{nottinghamDecomposingRoutingQuantum2023} which is optimal in terms of global pulse time. An illustration of the gate capabilities and how to synthesize the respective single- and multi-qubit gates is shown in \Cref{fig:neutral-atoms}.

\begin{figure}
	\centering
	\includegraphics[width=0.99\textwidth]{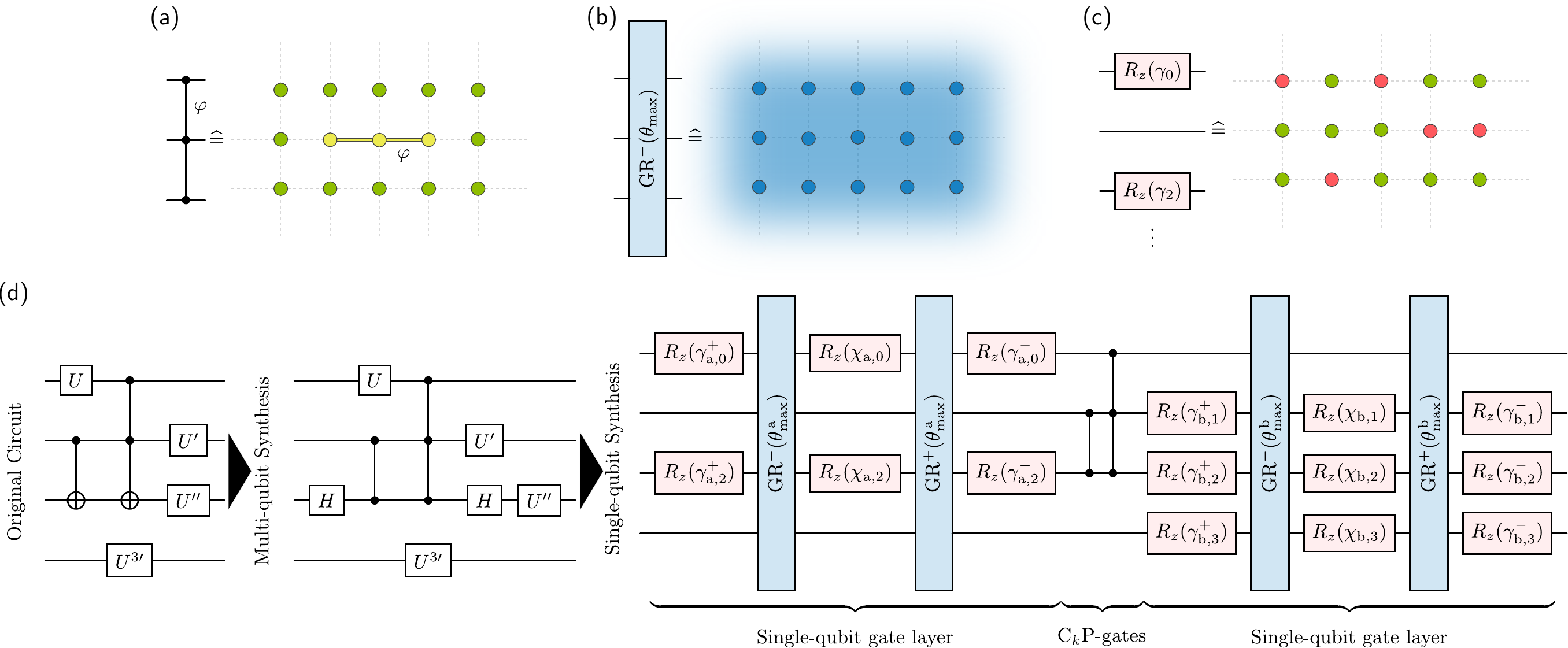}
	\caption{Illustration of the neutral atom gate capabilities and the process of synthesizing and scheduling quantum circuits to the hardware. \textbf{(a)} Native (multi-) controlled phase gates ($\mathrm{C}_{n} \mathrm{P}(\varphi)$), here shown for three qubits. \textbf{(b)} Global single-qubit rotations in the XY-plane. \textbf{(c)} Individually addressable Z-rotations ($\mathrm{R}_{z}(\gamma)$). \textbf{(d)} Synthesis to alternating single- and multi-qubit layers: First, the synthesis of multi-qubit gates to $\mathrm{C}_{n} \mathrm{P}$ gates. Second, the synthesis and scheduling of single-qubit gates into global XY-rotations and individually addressable Z-rotations according to the transversal decomposition of \cite{nottinghamDecomposingRoutingQuantum2023}.}
	\label{fig:neutral-atoms}
\end{figure}

\subsection{Related Work}
\label{sec:related-work}
Recently, there has been a fast development of NA-specific compilation methods~\cite{bakerExploitingLongDistanceInteractions2021,liTimingAwareQubitMapping2023,patelGeyserCompilationFramework2022,brandhoferOptimalMappingNearTerm2021,tanCompilingQuantumCircuits2023,nottinghamDecomposingRoutingQuantum2023,schmidHybridCircuitMapping2023,wangFPQACCompilationFramework2023,wangQPilotFieldProgrammable2023,tanDepthOptimalAddressing2D2024} focusing almost exclusively on the mapping and scheduling tasks within the compilation.
The only exceptions are \cite{nottinghamDecomposingRoutingQuantum2023}, proposing the single-qubit synthesis also used in this work, and the Geyser framework~\cite{patelGeyserCompilationFramework2022} using numerical optimization to introduce additional CCZ gates.
Nevertheless, both neglect the capability of NAs to natively execute controlled gates with arbitrary phases and, furthermore, the ability to directly execute multi-controlled gates for more than one control qubit.
Although convenient, this neglects potentially shorter or simpler circuits using these specific capabilities of NAs.
Therefore, in the following, we evaluate five different synthesis schemes to compare the commonly used naive synthesis algorithms with an NA-specific synthesis employing the ZX-approach of~\Cref{sec:algorithm}.

\subsection{Evaluation Setup}
\label{sec:evaluation-setup}
The total gate synthesis process consists of the two steps illustrated in \Cref{fig:neutral-atoms}:
First, the synthesis of the multi-qubit gates into $\mathrm{C}_{n} \mathrm{P}$ gates and arbitrary single-qubit rotations, and second, the synthesis and scheduling of single-qubit gates into the alternating global vs. local scheme, resulting in the fully synthesized circuit containing only gates from the native gate set of \Cref{eq:gate-definitions}.
For the first step, we consider the following five schemes:
\begin{enumerate}
  \itemsep0em
  \item \textbf{Qiskit-default:} Circuits are converted into the Qiskit-supported native gate set of $\{\mathrm{U}_{3}, CZ\}$ using the internal \texttt{transpile} function and setting the optimization level to three. This approach uses Qiskit internal schemes to decompose gates with more than two-qubit gates.
  \item \textbf{No-decomp:} The Qiskit decomposition introduces a large overhead that can be bypassed for NAs. For better comparison, we propose this scheme which synthesizes to $\{\mathrm{U}_{3}, \mathrm{C}_{n}\mathrm{Z}\}$ by replacing all (multi-) controlled gates by their $\mathrm{C}_{n}\mathrm{Z}\}$ equivalent and using the Qiskit-default approach for the single qubit gates. 
  \item \textbf{ZX-default:} The circuit is converted into a graph-like ZX diagram, and the default extraction algorithm of PyZX~\cite{kissinger2020Pyzx} is used to recreate a circuit.
  \item \textbf{ZX-no-insert:} Similar to the default but the extraction scheme from \Cref{sec:algorithm} is used to synthesize $\mathrm{C}_{n}\mathrm{P}$ gates.
  \item \textbf{ZX-with-insert:} In addition to ZX-no-insert, we allow the insertion of additional phase gadgets, resulting in possibly more and larger phase gate extractions.
\end{enumerate}
Since the ZX extraction sometimes produces redundant gates, we additionally apply a basic gate cancellation algorithm afterwards.
In the second step, the transverse decomposition according to~\cite{nottinghamDecomposingRoutingQuantum2023} is used to synthesize the single qubit gates.
In this scheme, $\theta_{\mathrm{max}} \equiv \max_{i}\theta_{i}$ is defined as the maximum of the first Euler angle $\theta_i$ of any single qubit rotation in this layer.
According to the discussion in \cite{nottinghamDecomposingRoutingQuantum2023}, the total gate execution time scales linearly in this angle with the maximal duration at $\theta_{\mathrm{max}}=\pi$. \\
As for many single-qubit gates the actual moment of execution is not unique, it can be added to different layers.
We thus use an additional greedy optimization step, not performed in~\cite{nottinghamDecomposingRoutingQuantum2023}, to check the possible positions of the single-qubit gates and assign them such as to minimize the overall $\theta_{\mathrm{max}}$.
In particular, a gate with Euler angle $\theta$ is preferably assigned to a layer with $\theta_{\mathrm{max}} > \theta$, allowing the gate to be executed without increasing the gate time.
As evaluation metrics, we compute both simple gate counts and the total circuit execution time $T$ by scheduling the gates according to the illustration in \Cref{fig:neutral-atoms}. We sum individual gate times, where we assume the gate execution time increases linearly with the rotation angle as follows:
\begin{equation}
	T = \sum_{i=0}^{d}\frac{|max_\gamma(R_z(\gamma),i)|}{\pi}100ns \; + \; \sum_{GR(\theta_{\mathrm{max}})}\frac{|\theta_{\mathrm{max}}|}{\pi}100\mu s \; + \; \sum_{C_1P(\varphi)}\frac{|\varphi|}{\pi}100ns \; + \; \sum_{C_nP(\varphi),n>1}\frac{|\varphi|}{\pi}400ns
\end{equation}

Here $d$ denotes circuit depth, meaning we assume full parallel execution of the $R_z$ if possible by taking the maximum angle of each layer. Multi-qubit gates are assumed to be executed in a sequential way.
The gate times are $\SI{0.1}{\micro\s}$ for the $R_{z}$~\cite{shawMultiensembleMetrologyProgramming2024} and the $\mathrm{CP}(\varphi=\pi)$ gate~\cite{madjarovHighfidelityEntanglementDetection2020}.
For all higher-weight controlled phase gates $\mathrm{C}_{\geq 2}\mathrm{P}$ we assume $\SI{0.4}{\micro\s}$ for $\varphi=\pi$.
The dominating factor for circuit execution time are the slow global illuminations GR with $\SI{100}{\micro\s}$~\cite{shawMultiensembleMetrologyProgramming2024}.
Therefore, our main aim to use the scheme of \Cref{sec:algorithm} is to lower the number of global GR gates and, in this way, reduce to overall execution time.

For a comprehensive and rigorous evaluation, we chose circuits from three different benchmark collections, with their descriptions available online: 
\textbf{QASM-Bench(small)}~\cite{liQASMBenchLowlevelQASM2022} and \textbf{MQT-Bench}~\cite{quetschlichMQTBenchBenchmarking2023} contain various low-level benchmark circuits of different sizes and types with common quantum subroutines and algorithms.
Additionally, we also consider the \textbf{Feynman-Bench}~\cite{amyLargescaleFunctionalVerification2019} collection. Created for formal methods, it contains different arithmetic circuits usually based on Toffoli gates. 

The code used for the evaluations is available with an MIT license at Zenodo~\cite{staudachercode} allowing reproducibility and possible usage or integration into other compilation projects.

\subsection{Results \& Discussion}
\label{sec:results}

\begin{table}[t]
\resizebox{\textwidth}{!}{\begin{threeparttable}
		\centering
		\caption{Averaged reduction of execution time $T$ relative to Qiskit-default. Negative percentages indicate an increased execution time.}
		\label{tab:avg-time-reduction}
		\begin{tabular}{@{}lcccccc@{}}
			\toprule
			&Circuits & \multicolumn{1}{|c}{Qiskit} &  \multicolumn{1}{c}{No-decomp}  &  \multicolumn{1}{|c}{ZX-default}  &  \multicolumn{1}{c}{ZX-no-insert}  &  \multicolumn{1}{c}{ZX-with-insert}    \\
			\midrule
			QASM-Bench~\cite{liQASMBenchLowlevelQASM2022}~${}^1$                  & 35 & $0\%$ & $8\%$ & $2\%$ & $14\%$ & $26\%$ \\
			MQT-Bench~\cite{quetschlichMQTBenchBenchmarking2023}~${}^2$           & 11 & $0\%$ & $0\%$ & $-44\%$ & $-16\%$ & $26\%$  \\
			Feynman-Bench~\cite{amyLargescaleFunctionalVerification2019}~${}^3$    & 26 & $0\%$ & $63\%$ & $-23\%$ & $-16\%$ & $40\%$ \\
			\bottomrule

		\end{tabular}
		\begin{tablenotes}\footnotesize
			\item ${}^1$~\href{https://github.com/pnnl/QASMBench}{https://github.com/pnnl/QASMBench} \hfill ${}^2$~\href{https://www.cda.cit.tum.de/mqtbench/}{https://www.cda.cit.tum.de/mqtbench/} \hfill ${}^3$~\href{https://github.com/meamy/feynman}{https://github.com/meamy/feynman}
		\end{tablenotes}
	\end{threeparttable}}
	\vspace{-0.5cm}
	
\end{table}

\begin{table}[t]
	\centering
	\caption{Evaluation results for six benchmarks, selected to illustrate both good and poor performance. Numbers after the names indicate the corresponding benchmark collection. The first table shows gate counts corresponding to the native gate set of \Cref{eq:gate-definitions}. The second table contains the total execution time $T [\SI{}{\milli\s}]$ and the synthesis algorithm runtime $r [\SI{}{\s}]$ on a consumer notebook.}
	\resizebox{\textwidth}{!}{
		\begin{tabular}{@{}lccccccccccccccccccccccccc@{}}
			\toprule
			&  \multicolumn{11}{|c|}{ZX}  &  \multicolumn{2}{c|}{Qiskit} & \multicolumn{3}{c}{Own alternative}  \\
			&  \multicolumn{2}{|c}{Default}  &  \multicolumn{3}{c}{No-insert}  &  \multicolumn{6}{c|}{With-insert}  &  \multicolumn{2}{c|}{Default}  &  \multicolumn{3}{c}{No-decomp}  \\
			\midrule
			&
			\multicolumn{1}{|c}{\rotatebox{90}{GR}} &
			\multicolumn{1}{c|}{\rotatebox{90}{CP}} &
			\rotatebox{90}{GR} &
			\rotatebox{90}{CP} &
			\multicolumn{1}{c|}{\rotatebox{90}{$\mathrm{C}_{2}\mathrm{P}$}} &
			\rotatebox{90}{GR} &
			\rotatebox{90}{CP} &
			\rotatebox{90}{$\mathrm{C}_2\mathrm{P}$} &
			\rotatebox{90}{$\mathrm{C}_3\mathrm{P}$} &
			\rotatebox{90}{$\mathrm{C}_4\mathrm{P}$} &
			\multicolumn{1}{c|}{\rotatebox{90}{$\mathrm{C}_{5}\mathrm{P}$}} &
			\rotatebox{90}{GR} &
			\multicolumn{1}{c|}{\rotatebox{90}{CP}} &
			\rotatebox{90}{GR} &
			\rotatebox{90}{CP} &
			\multicolumn{1}{c}{\rotatebox{90}{$\mathrm{C}_{2}\mathrm{P}$}} \\
			\midrule
			\midrule
			hhl_n7~(1)                      & \multicolumn{1}{|c}{448}   & \multicolumn{1}{c|}{296} &          362 & 241   & \multicolumn{1}{c|}{- }  & \bfseries 306 & 207 & 42  & 29  & 6 & \multicolumn{1}{c|}{-} & 356 & \multicolumn{1}{c|}{196} &            356 & 196 & -   \\
			qft_10~(2)                      & \multicolumn{1}{|c}{90 }   & \multicolumn{1}{c|}{140 } &          36  & 75    & \multicolumn{1}{c|}{- }  & \bfseries 14  & 62  & 14  & -   & - & \multicolumn{1}{c|}{-} & 44 & \multicolumn{1}{c|}{105} &            44 & 105 & -   \\
			qnn_10~(2)                      & \multicolumn{1}{|c}{106}   & \multicolumn{1}{c|}{334} &          62  & 199   & \multicolumn{1}{c|}{- }  & \bfseries 28  & 159 & 48  & 26  & 8 & \multicolumn{1}{c|}{1} & 76 & \multicolumn{1}{c|}{188} &            76 & 188 &-    \\
			gf2\textasciicircum 7\_mult~(3) & \multicolumn{1}{|c}{214}   & \multicolumn{1}{c|}{956} &          134 & 447   & \multicolumn{1}{c|}{19}  & \bfseries 14  & 22  & 114 & -   & - & \multicolumn{1}{c|}{-} & 194 & \multicolumn{1}{c|}{300} &            18  & 6   & 49  \\
			rc_adder_6~(3)                  & \multicolumn{1}{|c}{76 }   & \multicolumn{1}{c|}{100 } &          84 & 95    & \multicolumn{1}{c|}{- }   &           62  & 91  & 4   & -   & - & \multicolumn{1}{c|}{-} & 86 & \multicolumn{1}{c|}{93 } & \bfseries 28   & 27  & 11  \\
			qcla_adder_10~(3)               & \multicolumn{1}{|c}{128}   & \multicolumn{1}{c|}{331} &          138 & 462   & \multicolumn{1}{c|}{1 }  &           24  & 153 & 121 & -   & - & \multicolumn{1}{c|}{-} & 74 & \multicolumn{1}{c|}{233} & \bfseries  18  & 29  & 34  \\
			\bottomrule
		\end{tabular}
	}
	\bigskip

	\resizebox{\textwidth}{!}{
		\begin{tabular}{@{}lSSSSSSSSSSc@{}}
			\toprule
			&  \multicolumn{6}{|c|}{ZX}  &  \multicolumn{2}{c|}{Qiskit} &  \multicolumn{2}{c}{Own alternative} \\
			&  \multicolumn{2}{|c}{Default}  &  \multicolumn{2}{c}{No-insert}  &  \multicolumn{2}{c|}{With-insert}  &  \multicolumn{2}{c|}{Default}  &  \multicolumn{2}{c}{No-decomp} &  \\
			\midrule
			&
			\multicolumn{1}{|c}{$T$} & \multicolumn{1}{c|}{$r$} & $T$ & \multicolumn{1}{c|}{$r$} & $T$ & \multicolumn{1}{c|}{$r$} & $T$ & \multicolumn{1}{c|}{$r$} & $T$ & \hspace{10pt}$r$ \\

			\midrule
			\midrule
			hhl_n7~(1)                      & \multicolumn{1}{|S}{11.07 }  & \multicolumn{1}{S|}{0.085}&           8.71 & \multicolumn{1}{S|}{0.156}  &           7.70  & \multicolumn{1}{S|}{0.239} & 5.45 & \multicolumn{1}{S|}{0.122} & \bfseries 5.45 & 0.277    \\
			qft_10~(2)                      & \multicolumn{1}{|S}{2.16  } & \multicolumn{1}{S|}{0.036} &           0.91 & \multicolumn{1}{S|}{0.029}  & \bfseries 0.35  & \multicolumn{1}{S|}{0.039} & 0.89 & \multicolumn{1}{S|}{0.043} &           0.89 & 0.103   \\
			qnn_10~(2)                      & \multicolumn{1}{|S}{3.12  } & \multicolumn{1}{S|}{0.031} &           1.53 & \multicolumn{1}{S|}{0.108}  & \bfseries 0.74  & \multicolumn{1}{S|}{0.214} & 3.06 & \multicolumn{1}{S|}{0.117} &           3.06 & 0.151   \\
			gf2\textasciicircum 7\_mult~(3) & \multicolumn{1}{|S}{5.84  } & \multicolumn{1}{S|}{0.218} &           3.31 & \multicolumn{1}{S|}{1.286}  & \bfseries 0.40  & \multicolumn{1}{S|}{3.421} & 4.20 & \multicolumn{1}{S|}{0.083} &           0.47  & 0.041     \\
			rc_adder_6~(3)                  & \multicolumn{1}{|S}{1.89  } & \multicolumn{1}{S|}{0.030} &           2.04 & \multicolumn{1}{S|}{0.049}  &           1.56  & \multicolumn{1}{S|}{0.052} & 1.67 & \multicolumn{1}{S|}{0.033} & \bfseries 0.71  & 0.041    \\
			qcla_adder_10~(3)               & \multicolumn{1}{|S}{3.26  } & \multicolumn{1}{S|}{1.326} &           3.48 & \multicolumn{1}{S|}{0.448}  &           0.66  & \multicolumn{1}{S|}{2.063} & 1.63 & \multicolumn{1}{S|}{0.073} & \bfseries 0.47  & 0.078    \\
			\bottomrule
		\end{tabular}

	}
	\label{tab:results-gate-count}
\end{table}
The five compilation schemes are evaluated on gate count and execution time $T [\SI{}{\milli\s}]$ of the synthesized circuits, together with the algorithm runtime $r [\SI{}{\s}]$.
We first discuss time reduction averaged over all circuits, summarized in \Cref{tab:avg-time-reduction}, then, we highlight six examples shown in~\Cref{tab:results-gate-count}, which have been selected to best illustrate different cases within the dataset.
The full dataset with all raw data is available at Zenodo~\cite{staudachercode}.
On the QASM-Bench collection $(1)$, the ZX-with-insert approach results in an average $26\%$ reduction of execution time compared to the Qiskit internal synthesis improving 23 of the 35 circuits.
Similar numbers result for the MQT-Bench $(2)$ circuits with $26\%$ reduction of execution time, improving 7 out of 11 circuits.
For the Feynman benchmarks $(3)$, results are mixed: While our scheme achieves a $40\%$ reduction compared to Qiskit, improving 24 of 26 circuits, the No-decomp scheme has an even higher average reduction of execution time with $63\%$.

Considering the above part of \Cref{tab:results-gate-count} one can see how the synthesis approach described in this work is capable of successfully synthesizing $\mathrm{C}_{n}\mathrm{P}$ gates.
Since No-decomp just replaces multi-controlled gates by their $\mathrm{C}_{n}\mathrm{Z}\}$ equivalent, the corresponding column indicates the number of multi-controlled gates present in the original circuit. In comparison, one can then see that while No-insert only resynthesizes a few of the original gates, With-insert synthesizes more multi-qubit gates and is often able to create even higher-dimensioned controlled gates.
This higher-controlled gate synthesis appears very dominantly in dense circuits such as qnn\_10, corresponding to a quantum neural network circuit, but also in circuits that are natively built on controlled phase gates such as HHL.

Due to the controlled gates, the proposed approach is capable of effectively reducing the number of slow global GR gates in comparison to the regular ZX extraction scheme and Qiskit, which are not capable of synthesizing multi-controlled gates.
Generally, a lower number of GR gates also results in a shorter circuit execution time with some exceptions, such as the hhl\_n7, where the ZX synthesis has a longer execution time than the No-decomp scheme, although the number of absolute GR pulses is lower. This is likely due to an increased pulse time of the individual GR gates.

The ZX approaches do not perform well on circuits that already contain close to optimal multi-controlled gates, for instance on circuits of the Feynman benchmark.
Here, the approaches extract gates in a less efficient way, resulting in a gate and time overhead. In such cases, replacing multi-controlled gates without changing circuit structure as in the No-decomp scheme is the best option. This can also be seen when comparing the number of GR gates for the two adder circuits to the No-decomp scheme, where the ZX approaches are not capable of reconstructing a similar efficient circuit structure. Since all ZX strategies yield inefficient circuits, it may be that in such cases the ZX-diagram simplification creates too complex graph structures.

Regarding algorithmic runtime, ZX performs similarly to the Qiskit internal synthesis.
Note, however, the significant increase in runtime for qcla\_adder and gf2\textasciicircum t\_mult for the With-insert synthesis.
This is likely the overhead due to the insertion of additional phase gadgets, resulting in worst-case exponential runtime as discussed in \Cref{sec:time-complexity}.

\section{Conclusion}
In this work we introduced a novel approach to synthesize quantum circuits to the universal gate set $\{$H,R${}_z$,C${}_n$P$\}$.
As a key contribution, our approach is able to efficiently identify structures in graph-like ZX-diagrams that correspond to multi-controlled phase gates and extract them to quantum circuits. This allows the synthesis of such gates even if they were not present in the original circuit. Together with existing simplification strategies for ZX-diagrams, our approach can be used to synthesize arbitrary quantum circuits towards neutral atom architectures. Here, our synthesis often trades slow global pulse rotations for fast multi-controlled qubit gates and we are thus able to reduce execution time significantly for many common circuits.
Further, this could also help hardware developers to evaluate whether increasing the number of qubits supported by multi-controlled phase gates is beneficial for certain problems in terms of execution time and fidelity.
In cases where the circuit already consists of optimized multi-controlled gates, such as circuits based on arithmetic functions, the synthesis may result in less efficient quantum circuits. This is likely due to overly complex graph structures resulting from ZX-diagram simplification. \\
We leave it as a topic for further research whether in those cases more sophisticated strategies allow exploiting the phase gadget structures for multi-controlled phase gates synthesis without increasing the underlying graph structure complexity. Possible approaches include advanced heuristics applying the proposed scheme only to cases where it is likely to improve the circuit structure. 
We also want to mention that a similar synthesis approach could be done without ZX-calculus using the Pauli Dependency DAG representation of quantum circuits~\cite{simmons2021}. It has been shown that the diagram simplification rules from Section~\ref{sec:diag-simp} are equivalent to reordering Pauli terms in a Pauli Dependency DAG and by identifying patterns of individual Pauli-Z terms similar to Theorem~\ref{theo:mcp-gates} we can then synthesize C$_{n}$P gates. As future work, it would be interesting to see how these two versions compare.

\section*{Acknowledgments}
This work is partially supported by the German Federal Ministry of Education and Research (BMBF) under the funding program Quantum Technologies - From Basic Research to Market under contract number 13N16070.

The authors acknowledge funding from the Munich Quantum Valley initiative (K3, K5), which is supported by the Bavarian state government with funds from the Hightech Agenda Bayern Plus.

J.Z. acknowledges funding by the Max Planck Society (MPG), the Deutsche Forschungsgemeinschaft (DFG, German Research Foundation) under Germany’s Excellence Strategy EXC-2111-39081486, and acknowledges support from the German Federal Ministry of Education and Research (BMBF) through the program ``Quantum technologies - from basic research to market'' (SNAQC, Grant No. 13N16265).

L.S. and R.W. acknowledge funding from the European Research Council (ERC) under the European Union's Horizon 2020 research and innovation program (Grant Agreement No. 101001318).

The authors would like to thank Nathalia Nottingham for providing an implementation of the transversal layer decomposition and Miriam Backens for helpful discussions.

\bibliographystyle{eptcs}
\bibliography{refs}

\appendix
\newpage
\section{Gadget insertion}
\label{sec:gadget-insertion}
Inserting $YZ$ measurements on the outputs preserves gflow:
\begin{corollary}
	Let $(g,\prec)$ be a gflow for $(G,I,O,\lambda)$ and let $W\subseteq O$. Then $(G',I,O,\lambda')$, where $G' = (V',E')$ with $V'=V\cup \{x\}$, $\lambda'(x) = YZ$ and $E'=E\cup \{(x,w) | w\in W\}$ has a gflow $(g',\prec')$ with following properties:
	\begin{itemize}
		\item $g'(x) = \{x\}$,
		\item $\forall v \in V$: $g'(v) = g(v)$,
		\item $\prec'$ is the transitive closure of $\prec \cup \{(x,v)|v\in O\}\cup \{(v,x)|v\in V\backslash O\}$.
	\end{itemize}
	\begin{proof}
		The only new correction set in $g'$ is $g'(x) = \{x\}$, for all other vertices, it is the same as in $g$. Therefore, all conditions except $(g2)$ are trivially satisfied. For $(g2)$, we need to distinguish two cases for the new vertex $x$ and vertices $v\in V\backslash\{x\}$: \begin{itemize}
			\item $x\in Odd(g(v))$: By definition, $x$ is the last element in the partial order $\prec'$ of all non-outputs, thus $(g2)$ holds. 
			\item $v\in Odd(g(x))$: $x\prec v$ holds, because $g(x)=\{x\}$ and $Odd(\{x\})$ only contains outputs which we chose to be after $x$ in the partial order.
		\end{itemize}
		Note that $x\notin Odd(g(x))$ by definition. 
	\end{proof}

\end{corollary}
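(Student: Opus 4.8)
The plan is to verify directly that the proposed triple $(g',\prec')$ satisfies the five gflow conditions for the extended labelled open graph $(G',I,O,\lambda')$, and then to check separately that $\prec'$ is a genuine (acyclic) partial order. First I would record the structural facts about the extension. Since the only additions are the vertex $x$ and the edges $\{(x,w):w\in W\}$, we have $\overline{O}'=\overline{O}\cup\{x\}$ and $\overline{I}'=\overline{I}\cup\{x\}$; the neighbourhood of $x$ in $G'$ is exactly $W$; and an old vertex $u\in V$ gains $x$ as a neighbour only if $u\in W$. In particular $g'(x)=\{x\}\subseteq\overline{I}'$ and $g'(v)=g(v)\subseteq\overline{I}\subseteq\overline{I}'$, so $g'$ is well-typed.

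The key bookkeeping step is to compare the odd neighbourhoods in $G'$ with those in $G$. Because $g'(v)=g(v)\subseteq V$ for every $v\in\overline{O}$, the set $g(v)$ never contains the new vertex $x$; hence for any old vertex $u\in V$ the parity of $|N'(u)\cap g(v)|$ equals that of $|N(u)\cap g(v)|$, the extra edge to $x$ being irrelevant since $x\notin g(v)$. Thus $Odd'(g(v))\cap V = Odd(g(v))$, and the only possible extra element is $x$, with $x\in Odd'(g(v))$ iff $|W\cap g(v)|$ is odd. I would also compute $Odd'(\{x\})=W\subseteq O$. With these two identities the verification becomes mechanical.

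I would then proceed case by case. For an old vertex $v\in\overline{O}$, conditions (g1) and (g3)--(g5) are inherited unchanged, since $g'(v)=g(v)$ and the membership tests $v\in g(v)$ and $v\in Odd'(g(v))$ agree with the old ones (as $v\in V$); for (g2) the only new obligation arises when $x\in Odd'(g(v))$, and then $v\prec' x$ holds by construction, as $v\in V\setminus O$. For the new vertex $x$, with $g'(x)=\{x\}$ and $\lambda'(x)=YZ$: condition (g5) holds because $x\in\{x\}$ and $x\notin W=Odd'(\{x\})$; (g1) is vacuous; and (g2) holds because every element of $Odd'(\{x\})=W$ is an output and $x\prec' o$ for all $o\in O$ by construction.

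The step I expect to be the main obstacle is confirming that $\prec'$ is genuinely a partial order, i.e.\ that taking the transitive closure after inserting $x$ below every non-output and above every output creates no cycle. A cycle would force some output to lie $\prec$-below a non-output in the original order; to rule this out I would observe that the gflow conditions never impose a relation emanating downward from an output (no condition has an output as its smaller element), so one may assume without loss of generality that the original $\prec$ is the minimal order making all outputs maximal. Under this normalisation $x$ sits safely as the unique new maximal non-output and acyclicity of $\prec'$ is immediate. This normalisation is the one delicate point; the remaining obligations reduce to the routine parity computations above.
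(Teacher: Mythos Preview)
Your proof is correct and follows essentially the same direct-verification strategy as the paper: fix $g'$ as stated, keep the old correction sets, and check the gflow axioms, with the only nontrivial new obligations arising from (g2) in the two directions involving $x$. The paper's argument is terser (it declares everything but (g2) ``trivially satisfied'' and dispatches the two (g2) cases exactly as you do, via $Odd(\{x\})\subseteq O$ and $x$ being $\prec'$-maximal among non-outputs), but the logical content is the same.

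The one point on which you go beyond the paper is the acyclicity of $\prec'$. The paper does not discuss this at all, whereas you correctly observe that the transitive closure could in principle create a cycle if the given $\prec$ placed some output strictly below a non-output, and you resolve it by normalising $\prec$ so that outputs are maximal (which is harmless since no gflow constraint ever forces $o\prec v$ with $o\in O$). This is a genuine gap in the paper's presentation that your proposal patches; strictly speaking it means the specific $\prec'$ in the statement may fail to be a partial order for an arbitrary starting $\prec$, and your normalisation is exactly the fix needed to make the corollary true as an existence statement about gflow on $G'$.
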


\section{Alternative proof of \Cref{theo:mcp-gates}}
\label{sec:proof-theorem-x}

\begin{figure}[h]
  \centering
  \includegraphics[width=\textwidth]{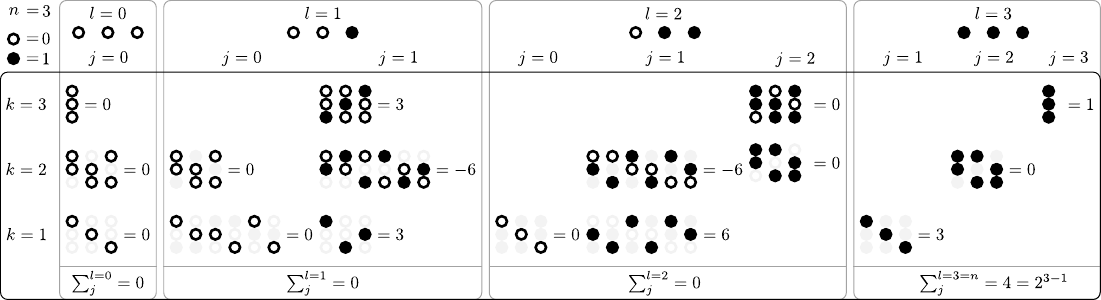}
  \caption{Illustration of the possible combinations and their contribution in \Cref{eq:lemma-n-ueq-l} for $n=3$. For each possible value of $l=0,\, \dots \, , 3$ the combinations for the possible $k \leq n$ and $j \leq l$ are illustrated as three-circle circles, and their contribution to the sum is computed. The final row shows that the sums of the contributions fulfill the condition of \Cref{lemma:multi-controlled-phase-gate}. }
  \label{fig:combinatorics}
\end{figure}

This section provides an alternative, combinatorial proof of \Cref{theo:mcp-gates} instead of using the graphical approach discussed in the main part of the work.
The overarching idea is to find a closed formula for the unitary defined by the ZX illustration by summing the corresponding phase contributions and showing that this corresponds to $\mathrm{diag}(1, 1, \dots , e^{i\alpha})$ for an arbitrary number of qubits $n$.

To find a closed formula for \Cref{theo:mcp-gates} consider the definition of a single phase gadget and its corresponding unitary action on the \mbox{$n$-qubit} basis states according to~\cite{kissingerReducingNumberNonClifford2020}:
\begin{equation}
  \label{eq:phase-gadget}
\tikzfig{phase-gadget} \qquad  U \ket{x_{1},\, \dots,\, x_{n}} = e^{i \alpha (x_{1} \bigoplus\, \dots\, \bigoplus x_{n})} \ket{x_{1},\, \dots,\, x_{n}} \quad ,
\end{equation}
where the binary $x_{i} \in \{0,1\},\, i=1,\, \dots,\, n$ label the basis states and $\oplus$ is the binary sum modulo two, i.e. XOR.
Note that all phase gadgets of \Cref{theo:mcp-gates} can be written in such a way, summing only the $x_{i}$ connected by Hadamard wires to the phase.
The single qubit phases give an additional contribution of $e^{i \alpha \, x_{j}} \ket{x_{1},\, \dots,\, x_{n}}$ for each applied qubit~$j$, corresponding to a single-qubit phase gadget.

As the binary sum does not depend on the order of the $x_{i}$ but only on their value, we introduce the following notation, where we assume that $l$ entries in the sum are non-zero, resulting in a non-zero-sum whenever $l$ is odd:
\begin{equation}
  \label{eq:sum-delta}
  x_{1} \oplus \, \dots\, \oplus x_{n} \quad = \quad \frac{(-\!1)^{l + 1} + 1}{2} \quad = \quad \mathrm{mod}_{2}(l) \quad = \quad \begin{cases}
	1 &, l \text{ is odd} \\
	0 &, l \text{ is even }
  \end{cases} \, .
\end{equation}

Considering again \Cref{eq:extract-phases} one can see that there are two contributions to the total accumulated phase.
First, for each qubit, a single-qubit phase $\alpha$ is added.
Second, for each possible combination of length $k$ of all the $x_{i}$, there is a phase gadget with phase $(-\!1)^{k+1}\,\alpha$.
For the $\mathrm{C}_{1}\mathrm{P}(2\alpha)$ gate, this reduces to a single $k=2$ phase gadget of phase $-\alpha$.
For the $\mathrm{C}_{2}\mathrm{P}(4\alpha)$ gate, on the other hand, there are $\binom{3}{2}=3$ phase gadgets of size $k=2$ and angle $-\alpha$ and a single ($\binom{3}{3}=1$) $k=3$ gadget with angle $\alpha$.
In general, for a $n$ qubit gate, there are $\binom{n}{k}$ combinations for phase gadgets of size $k=1,\, \dots,\, n$.
A combination contributes to the total phase if the number $l$ of non-zero entries in the direct sum of \Cref{eq:phase-gadget} is odd, resulting in an additional phase $\pm \alpha$.
Otherwise, the combination does not contribute to the phase.
To express the number of possible combinations depending on $l$, the $\binom{n}{k}$ combinations for a length $k$ can also be expressed as choosing $j$ variables from the $l$ one-valued variables and choosing $k-j$ variables from the $n-l$ zero-valued variables and summing over all possible $j$:
\begin{equation}
  \label{eq:vandermonde}
  \binom{n}{k} = \sum_{j=0}^{k} \binom{l}{j}\binom{n-l}{k-j} \, .
\end{equation}
This relation is known as the \textit{Vandermonde identity}. An illustration of the possible combinations depending on $k$ and $l$ is shown in \Cref{fig:combinatorics} for the simple case $n=3$.

Multiplying the unitaries of all these phase gadgets corresponds to summing the accumulated phases with the appropriate sign, converting the problem of \Cref{theo:mcp-gates} into a summation of the appropriate phases with a corresponding sign.
For the total structure to represent a multi-controlled phase gate, the phases have to vanish for all possible basis states $\ket{x_{1},\, \dots,\, x_{n}}$ except for $\ket{1, \, \dots\, 1}$ where they have to sum to $2^{n-1}\alpha$.

Based on these considerations, an equivalent statement of \Cref{theo:mcp-gates} can be formulated, dropping the illustrations of the ZX-calculus and formulating the multi-controlled phase gate extraction as a purely combinatorial problem, focusing on the accumulated phase.
\Cref{theo:mcp-gates} then directly follows from this Lemma based on the considerations above and using $e^{i\alpha \cdot 0} = 1$.
\begin{lemma}[Multi-controlled phase gate]
  \label{lemma:multi-controlled-phase-gate}
  For $n$ binary variables $x_{1}\, \dots\, x_{n}$ of which $l$ is non-zero, summing the modulo two sum over all possible combinations of length $k$ with sign $(-1)^{k+1}$, it holds:

  \begin{equation}
	\label{eq:lemma-n-ueq-l}
	\sum_{k=1}^{n} (-\!1)^{k+1} \sum_{j=0}^{\mathrm{min}(k,l)} \binom{l}{j} \binom{n-l}{k-j} \mathrm{mod}_{2}(j) = \begin{cases}
	  2^{n-1} &, \text{ if } $n=l$ \\
	  0 &, \text{ else}
	\end{cases} \quad .
  \end{equation}
  Where the $\mathrm{min}(k,l)$ results from the fact that the number of ones in the current combination $j$ cannot be larger than the length $k$ of the combination, nor the total number of ones $l$ available.
\end{lemma}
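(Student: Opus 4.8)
The plan is to evaluate the left-hand side of \Cref{eq:lemma-n-ueq-l} by swapping the order of summation, pulling the $j$-sum outside and isolating the inner $k$-sum as a Vandermonde-type convolution that simplifies in closed form. Writing $S(n,l)$ for the left-hand side and exchanging the summation order, we get
\[
S(n,l) = \sum_{j=0}^{l} \binom{l}{j}\,\mathrm{mod}_2(j)\,\sum_{k=j}^{n} (-1)^{k+1} \binom{n-l}{k-j} \, ,
\]
where the inner sum runs over $k \ge j$ since $\binom{n-l}{k-j}=0$ otherwise. Substituting $m = k-j$ turns the inner sum into $(-1)^{j+1}\sum_{m=0}^{n-l} (-1)^{m} \binom{n-l}{m}$. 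By the binomial theorem this is $(-1)^{j+1}(1-1)^{n-l} = (-1)^{j+1}\,[\,n=l\,]$, which already forces $S(n,l)=0$ whenever $n \neq l$ — that half of the claim is essentially free once the sums are reordered.

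For the case $n=l$ the inner sum collapses (only $m=0$ survives) and we are left with
\[
S(n,n) = \sum_{j=0}^{n} \binom{n}{j}\,\mathrm{mod}_2(j)\,(-1)^{j+1} = \sum_{\substack{j=0\\ j \text{ odd}}}^{n} \binom{n}{j}\,(-1)^{j+1} = \sum_{\substack{j \text{ odd}}} \binom{n}{j} \, ,
\]
using that $(-1)^{j+1}=1$ exactly when $j$ is odd, which is the same condition as $\mathrm{mod}_2(j)=1$. The sum of the binomial coefficients over odd $j$ is the standard identity $\sum_{j \text{ odd}} \binom{n}{j} = 2^{n-1}$, obtained by adding and subtracting $(1+1)^n$ and $(1-1)^n$ (valid for $n \ge 1$; the $n=0$ degenerate case can be checked directly or excluded since a $0$-controlled phase gate is not of interest). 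This gives exactly $2^{n-1}$, completing the lemma.

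The only real subtlety — the ``hard part,'' though it is mild — is justifying the interchange of the $k$- and $j$-summations together with the $\min(k,l)$ cutoff: one has to check that extending the $j$-range to all of $\{0,\dots,l\}$ and the $k$-range appropriately introduces only terms that vanish because the relevant binomial coefficient is zero, so that no spurious contributions appear. Once the index ranges are handled carefully, everything else reduces to two applications of the binomial theorem, and \Cref{theo:mcp-gates} follows from the lemma by the phase-accumulation argument already sketched before the lemma statement, using $e^{i\alpha\cdot 0}=1$ for the non-$\ket{1\cdots1}$ basis states and $e^{i\alpha\cdot 2^{n-1}}=e^{i\varphi}$ for $\ket{1\cdots1}$.
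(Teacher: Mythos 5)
Your proof is correct, and it takes a genuinely different and more direct route than the paper. The paper splits into two cases: for $n=l$ it evaluates the sum explicitly via the regular and alternating binomial sums, and for $n\neq l$ it runs a double induction (first in $n$, then in $l$), repeatedly invoking the Pascal recurrence and the Vandermonde identity and carefully tracking boundary terms. You instead exchange the order of summation, at which point the inner $k$-sum telescopes into $(-1)^{j+1}(1-1)^{n-l}$ by the binomial theorem; this kills the $n\neq l$ case in one stroke and reduces the $n=l$ case to $\sum_{j\ \mathrm{odd}}\binom{n}{j}=2^{n-1}$. The index bookkeeping you flag as the only subtlety is indeed harmless: the $j=0$ term is annihilated by $\mathrm{mod}_2(0)=0$ (so the discrepancy between starting the inner sum at $k=0$ versus $k=1$ is immaterial), terms with $j>\min(k,l)$ vanish because $\binom{n-l}{k-j}=0$ for $k<j$ and $\binom{l}{j}=0$ for $j>l$, and extending the upper limit of the $m$-sum from $n-j$ to $n-l$ is valid since $j\le l$. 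Your argument is shorter, uniform in both cases, and avoids induction entirely; the paper's version, while longer, is perhaps more self-explanatory about where each combinatorial identity enters and connects more directly to the illustration of the combinations in its appendix figure. Either proof establishes the lemma, and the deduction of \Cref{theo:mcp-gates} from it is the same phase-accumulation argument in both.
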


\begin{proof}
  The proof is two-fold. First, the $n=l$ case is shown explicitly, while the case $n \neq l$ is shown by induction in both variables $n$ and $l$.
  Also, note that
  \begin{equation}
	\label{eq:mun-sum-id}
	\sum_{j=0}^{\mathrm{min}(k,l)} \binom{l}{j} \binom{n-l}{k-j} = \sum_{j=0}^{k} \binom{l}{j} \binom{n-l}{k-j}  =  \sum_{j=0}^{l} \binom{l}{j} \binom{n-l}{k-j}
  \end{equation}
  as for $k>l$ the first binomial coefficient vanishes in all additional cases, and for $l>k$ the second, as $k-j<0$ in these cases.
  This also becomes clear from the illustration in \Cref{fig:combinatorics} where the vanishing combinations are either non-existent or only contain zero entries.
  These identities are used multiple times in the following proof.

  \subsubsection*{Case $n=l$:}
\label{sec:n-eq-l}
If $n=l$ all variables are one and, therefore, $\mathrm{min}(k,l)=k$. Furthermore, the second binomial coefficient is non-zero only in the $j=k$ case, where it equals 1. This results in
\begin{equation*}
  \sum_{k=1}^{n}(-\!1)^{k+1} \binom{n}{k} \frac{(-\!1)^{k+1}+1}{2} = \frac{1}{2} \left[ \sum_{k=1}^{n}\binom{n}{k} (-\!1)^{k+1} + \sum_{k=1}^{n}\binom{n}{k} \right] = \frac{1}{2} \left[ 1 + 2^{n} -1 \right] = 2^{n-1} \, ,
\end{equation*}
using the regular and the alternating binomial sum, directly showing the first part of \Cref{lemma:multi-controlled-phase-gate}.

  \subsubsection*{Case $n \neq l$:}
\label{sec:n-neq-l}
Proving \Cref{eq:lemma-n-ueq-l} for arbitrary $n$ and $l < n$ is done by induction.
Therefore, showing the term to be zero for $l=0$ and arbitrary $n$ as the base case and then performing the induction step both in $n$ and in $l$.

\textit{Base case $l=0,\, n$:} In this case the second sum reduces to the $j=0$ case, trivially giving zero, independetly for all $n$. In other words, as all variables are zero, the sum in \Cref{eq:phase-gadget} always gives zero.

\textit{Induction step $n \rightarrow n+1$:} Inserting this step into \Cref{eq:phase-gadget} and using the recurrence relation of the binomial coefficient $\binom{n+1}{k} = \binom{n}{k-1} + \binom{n}{k}$ and the abbreviation $\xi \coloneqq \mathrm{mod}_{2}(j)$ one gets
\begingroup
\allowdisplaybreaks
  \begin{align*}
	\sum_{k=1}^{n+1}& (-\!1)^{k+1} \sum_{j=0}^{l} \binom{l}{j}\binom{n - l}{k-j-1} \xi + (-\!1)^{n} + \underbrace{\sum_{k=1}^{n} (-\!1)^{k+1} \sum_{j=0}^{l} \binom{l}{j}\binom{n - l}{k-j} \xi}_{=0 \text{ (Base case)}} \\
	& \qquad + \cancel{(-\!1)^{n} \sum_{j=0}^{l} \binom{l}{j}\binom{n - l}{n + 1-j} \xi} \\
								  &=\sum_{k=0}^{n} (-\!1)^{k} \sum_{j=0}^{l} \binom{l}{j}\binom{n - l}{k-j} \frac{(-\!1)^{j} + 1}{2} =\sum_{k=0}^{n} (-\!1)^{k+1} \sum_{j=0}^{k} \binom{l}{j}\binom{n - l}{k-j} \frac{(-\!1)^{j+1} + 1 - 2}{2} \\
								  & = \underbrace{(-\!1)^{1} \binom{l}{0}\binom{0-l}{0-0} \frac{-\!2}{2} \vphantom{\sum_{k=1}^{n} (-\!1)^{k+1} \sum_{j=0}^{l} \binom{l}{j}\binom{n - l}{k-j} \xi}}_{k=0 \text{ case }} + \underbrace{\sum_{k=1}^{n} (-\!1)^{k+1} \sum_{j=0}^{l} \binom{l}{j}\binom{n - l}{k-j} \xi}_{=0 \text{ (Base case)}} + \sum_{k=0}^{n} (-\!1)^{k+1} \underbrace{\sum_{j=0}^{k} \binom{l}{j}\binom{n - l}{k-j}}_{\text{Vandermonde}}  \\
	& = 1 + 0 - 1 = 0 \quad ,
  \end{align*}
 \endgroup
writing the $n+1$ term separately to recover the base case.
The term in the second line vanishes due to the lower part of the binomial coefficient always being larger than the top part.
Going to the third line, an index shift in $k$ is performed, and then using \Cref{eq:mun-sum-id} with an additional reinserted $(-\!1)$ factor to recover the original form of $\xi$.
Separating the $k=0$ case and the additional introduced $-\!2$ term, the base case can be inserted again, resulting in zero after using again the Vandermonde identity and the alternating binomial sum.
In a similar fashion, also the induction step in $l$ can be shown.

\textit{Induction step $l \rightarrow l+1$:} With the base case for arbitrary $n$ and the corresponding induction step in $n$, one can, in the following, assume the base case to be true for arbitrary $n$, in particular for $n' = n - 1$. Performing the induction step in $l\rightarrow l+1$ and again using the recurrence relation, one gets
\begingroup
\allowdisplaybreaks
  \begin{align*}
	\sum_{k=1}^{n}& (-\!1)^{k+1} \sum_{j=0}^{l+1} \binom{l+1}{j} \binom{n - (l+1)}{k-j} \xi \\
				  & = \sum_{k=1}^{n} (-\!1)^{k+1} \sum_{j=0}^{l+1} \binom{l}{j-1} \binom{n'-l}{k-j} \xi + \underbrace{\sum_{k=1}^{n'} (-\!1)^{k+1} \sum_{j=0}^{l} \binom{l}{j} \binom{n'-l}{k-j} \xi}_{=0 \text{ (Base case for $n'$)}  } \\
	\\
				  & \hspace{1cm} + \cancel{\underbrace{\sum_{k=1}^{n} (-\!1)^{k+1} \binom{l}{l+1} \binom{n'-l}{k-l -1}}_{j=l+1 \text{ case}}} + \cancel{\underbrace{(-\!1)^{n+1} \sum_{j=0}^{l} \binom{l}{j} \binom{n'-l}{n-j} \xi}_{k=n \text{ case }}} \\
				  & = \sum_{k=1}^{n} (-\!1)^{k+1} \sum_{j=-1}^{l} \binom{l}{j} \binom{n'-l}{k-j - 1} \frac{(-\!1)^{j} + 1}{2} \\
				  & = \sum_{k=0}^{n'} (-\!1)^{k} \sum_{j=0}^{l} \binom{l}{j} \binom{n'-l}{k-j} \frac{(-\!1)^{j+1} + 1 }{2} \\
	& = 0 \quad ,
  \end{align*}
  \endgroup
  with the base case for $n'$ used in the second line and the additional terms vanishing because either the first or the second binomial coefficient is zero.
  In the following two lines, first, an index shift in $j$ is performed, and then, secondly, in $k$.
  The result is the same formula as in the previous calculation, just for $n'$ and therefore, also vanishes.

This concludes the induction step in $l$, concluding also the $n\neq l$ case of \Cref{eq:lemma-n-ueq-l} and therefore proving \Cref{lemma:multi-controlled-phase-gate}.

\end{proof}

\end{document}